\newcommand{\eqnlabel}[1]{\label{eqn:#1}}
\newcommand{\eqnref}[1]{(\ref{eqn:#1})}
\providecommand{\Wc}{{\cal W}}
\newtheorem{MyTheorem}{Theorem}
\newcommand{\thmlabel}[1]{\label{thm:#1}}
\newcommand{\thmref}[1]{\ref{thm:#1}}
\newtheorem{MyLemma}{Lemma}
\newtheorem{MyRemark}{Remark}
\title{Low Latency Communications}
\begin{document}


\author{
\authorblockN{Ivana Mari\'c}
\authorblockA{Aviat Networks\\
Santa Clara, CA \\
ivana.maric@aviatnet.com}
}


\maketitle

\date{}
\begin{abstract}
Numerous applications demand communication schemes that minimize the transmission delay while achieving a given level of reliability. An extreme case is high-frequency trading whereby saving a fraction of millisecond over a route between Chicago and New York can be a game-changer. While such communications are often carried by fiber, microwave links can reduce transmission delays over large distances due to more direct routes and faster wave propagation. In order to bridge large distances, information is sent over a multihop relay network. 

Motivated by these applications, this papers present an information-theoretic approach to the design of optimal multihop microwave networks that minimizes end-to-end transmission delay. 
To characterize the delay introduced by coding, we derive error exponents achievable in multihop networks.
We formulate and solve an optimization problem that determines optimal selection of amplify-and-forward and decode-and-forward relays.  
We present the optimal solution for several examples of networks. We prove that in high SNR the optimum transmission scheme is for all relays to perform amplify-and-forward.  We then analyze the impact of deploying noisy feedback.  
\end{abstract}
\section{Introduction}
Operating close to the channel capacity requires encoding with large codelengths in order to guarantee diminishing probability of error.
In turn, large codelengths introduce decoding delay at the receiver.
If data is sent over a multihop network, this delay can multiply over multiple hops, thereby increasing the end-to-end latency. 
On the other hand, numerous applications, instead of striving to operate at the maximum rate, demand communications with the minimum latency. 
An extreme case is {high-frequency trading} in which profits depend on computer-based algorithmic trades that are made as fast as possible.
In these settings, in order to bridge large distances, information is sent over a multihop relay network (see Fig.~\ref{Fig2}). 
At the same time, saving a fraction of millisecond over, for example, a route between Chicago and New York, can be a game-changer. 
According to \cite{news}, 1 ms of reduced delay translates into \$100 million profit per year. 
On the Chicago-New York route, fiber can deliver data in 6.6 ms. On the other hand, the latest microwave network can deliver data in 4.1 ms \cite{PressRelease}.
The gain in the microwave transmission comes from faster wave propagation in the air when compared to the fiber, and from shorter routes. 
In this paper, we are concerned with such low-latency communications.

End-to-end transmission delay drastically varies with the choice of the cooperative scheme used by a relay. In particular, a relay performing decode-and-forward (DF) will introduce a delay of the order of the size of the block, $n$, that it needs to receive prior to decoding, re-encoding and forwarding a message. In contrast, an amplify-and-forward (AF) relay can forward on per-symbol basis, thus introducing a roughly $n$ times smaller delay. 
However, the simplicity of AF comes at the expense of amplifying and propagating the noise thereby reducing the effective received signal-to-noise ratio with every subsequent AF hop. The reduced SNR reduces the transmission rate and ultimately, after a sequence of AF hops, results in a higher delay when compared to DF. 

In microwave low-latency networks, common practice is to perform DF or AF at a node uniquely  on the basis of its received SNR.  A relay with a received SNR above a certain threshold performs AF, otherwise, it performs DF. Typically, this criterion performs DF at a relay after a several AF hops or after one long hop.

\begin{figure}[t]
\centering
\includegraphics[height=.9cm, width=3.5in]{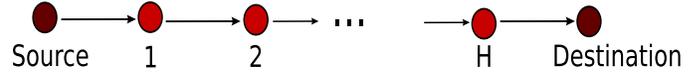}
\caption{Multihop network. Relays are labeled as 1,\ldots,H}
      \label{Fig2}
\end{figure}
Our goal is to design a multihop microwave network to minimize the end-to-end delay. Towards that goal, we address several practical questions: 1) Is the common practice of assigning AF/DF relays based on received SNR optimal? 2) If not, when should DF relays be used given that they introduce larger delays? 3) Does this selection depend on the SNR regime the network operates in? 4) Can the delay be reduced by deployment of noisy feedback?

In this paper, we answer the above questions.
To characterize the delay introduced by coding, we derive error exponents achievable in a multihop network.
The error exponent characterizes the tradeoff between the code block size (and hence the delay) and the reliability \cite{GallagerBook}.
Given the desired reliability and using the error exponent, we obtain the lower bound on the delay in the considered multihop network.
We then formulate and solve an optimization problem that determines optimal selection of amplify-and-forward and decode-and-forward relays. We demonstrate that an approach in which the selection of AF and DF scheme at a relay is solely based on the received SNR is suboptimal.  We present the optimal DF/AF selections for several examples of networks. We prove that in high SNR the optimum transmission scheme is for all relays to perform amplify-and-forward. 
We show that in a symmetric network, all decode-and-forward nodes should be separated by an equal number of amplify-and-forward relays.

We then analyze  benefits of deploying noisy feedback. 
Our consideration of feedback is motivated by the well known fact that  feedback can improve the error exponents, in some scenarios drastically \cite{SchalkwijkKailath1966}.
The error exponents for the point-to-point channel with active noisy feedback for binary signaling were analyzed  in \cite{KimLapidothWeissman2011}. We extend some of these results to the multihop relay network. 
We first investigate the impact feedback has on the error exponent and thus delay in the single-relay channel. We then extend the analysis to the multihop network. 

{\it Related Work}

 For discrete multihop networks with DF relays reliability bounds were analyzed in \cite{Oyman2006}.  Minimizing the delay in Gaussian multihop networks with DF relays was presented in \cite{WenBerry2008}. Error exponents in multihop network with AF relays were analyzed in \cite{NgoLarsson2011}. Minimizing latency over a microwave networks with DF and AF nodes by considering channel capacity was considered in \cite{Jorgensen2012}.

The paper is organized as follows. In Section \ref{MultihopNetwork} we define and solve the optimization problem that determines the optimal selection of DF and AF relays.
Section \ref{Feedback} analyzes the impact of feedback. Section \ref{Conclusion} concludes the paper and discusses future work. In the paper, the proofs of theorems are outlined. Detailed proofs are available in \cite{MarictoSubmit}.

\section{Multihop Network} \label{MultihopNetwork}
We consider a single source-destination multihop wireless network in which data from the source to the destination is transmitted via $H$ relays (see Fig.~\ref{Fig2}). 
Each node is equipped with a single antenna. Following practical constraints, we assume that all transmissions are orthogonal and that each node communicates only with its neighbor, as indicated  in Fig.~\ref{Fig2}.  We consider a Gaussian channel where a transmitted signal is corrupted by the additive, white Gaussian noise. The received signal at the node $k$ is  given by
\begin{align} \eqnlabel{channelmodel}
y_k = h_{k-1} x_{k-1} + z_k \qquad k=1, \ldots, H
\end{align}
where the transmit signal at the source  is denoted  $x_0$. The channel gain from a node $k-1$ to node $k$ is denoted $h_{k-1}$. Noise $z_k$ has zero mean and variance $\sigma^2_k.$ 
Similarly, the received signal at the destination is 
\begin{align} \eqnlabel{yD}
y_D = h_H x_H + z_D
\end{align}
where $z_D$ is zero mean with variance $\sigma^2_D$.
To simplify the presentation we assume in this section that $\sigma^2_k =\sigma^2$ for all $k=1, \ldots, H$ and $k=D$.
The power constraint at node $k$ is given by 
\begin{equation} \eqnlabel{Pk}
E[X_k^2] \le P_k.
\end{equation}
The source sends $B$ bits information intended for the destination node $D$ using a codeword of length $n_0$,  by sending a message $W$ from the message set $\Wc =\{1, \ldots, 2^B \}$. The encoding function at the source is given by $X_0^{n_0} = f(W)$. A general encoding function at each relay $k$  at time $i$ is given by $X_{k,i} = f_{k,i}(Y_k^{i-1})$. 
Each relay in the network performs either decode-and-forward or amplify-and-forward. 
The decode-and-forward scheme does not require block Markov encoding \cite{CoverElGamal79} because each receiving node receives signal only from one other node. Let $K \le H$ denote the number of DF relays. 
The $k$th DF relay performs decoding $\hat W_k = g_k(Y_k^{n_{k-1}})$ where $\hat W_k$ denotes the message estimate at that node and $n_{k-1}$ denotes codelength used by $(k-1)$th DF node. After decoding, the $k$th relay sends a codeword of length $n_k$: $X_k^{n_k} = f_k(\hat W_k)$. On the other hand, a relay $k$ performing AF, at each time instant $i$ transmits 
\begin{align}
x_k(i) = \beta_k y_k(i-1)  \eqnlabel{afnof}
\end{align}
where $\beta_k$ denotes the amplification gain. From \eqnref{afnof} and due to the power constraint at the relay \eqnref{Pk},  $\beta_k$ satisfies:
\begin{equation}
\beta_k^2 \le \frac{P_k}{h_{k-1}^2P_{k-1} + \sigma^2}. \eqnlabel{beta}
\end{equation} 
The decoding function at the destination is given by $\hat W = g(Y_D^{n_K})$ where $n_K$ denotes the codelength used at the $K$th DF relay in the network. The average error probability of the code is $P_e^{(n_K)} = P[W \neq \hat W(Y_D^{n_K})]$.

Our goal is to minimize the delay in sending messages between the source and the destination while guaranteeing a required level of reliability $\delta_e$ at the destination, i.e.,
\begin{equation}\eqnlabel{reliability}
P_e^{(n_K)} \le \delta_e.
\end{equation}
 We consider a problem in which the network is already in place, i.e, the relays are already positioned in the network. Therefore, the number of hops $H$ and channel gains are given. The considered multihop network is typically a microwave network with a high capacity line-of-sight channel at each hop. The channel variations are much slower compared to a cellular network and thus a transmitter typically has the channel state information. Our goal is to determine a cooperative strategy  such that the end-to-end delay is minimized  while guaranteeing a required level of reliability \eqnref{reliability}.
In order to minimize the delay, we next review the error exponents and the delay associated with decode-and-forward and amplify-and-forward cooperative schemes.


The error exponent is defined by \cite{GallagerBook}
\begin{equation} \eqnlabel{E0}
E_r = - \lim_{n \rightarrow \infty} \sup \frac{1}{n} P_e^{*(n)}
\end{equation}
where $P_e^{*(n)}$ denotes the infimum of the error probability over all $(R,n)$ codes.
In a Gaussian point-to-point  channel with a received signal-to-noise ratio denoted as SNR, by choosing the Gaussian inputs the error exponent \eqnref{E0} evaluates to 
\begin{equation} \eqnlabel{E0Gauss}
E_r\ge \max_{\rho \in [0,1]} \left[ \rho \log(1 + \frac{\text{SNR}}{1+\rho}) - \rho R \right].
\end{equation}
In order to satisfy a reliability constraint \eqnref{reliability},
the delay introduced by transmission of a codeword of length $n_{pp}$  can be calculated from  \eqnref{E0} and \eqnref{E0Gauss} to be
\begin{equation} \eqnlabel{delay}
n_{pp} \ge \frac{\rho B-\log \delta_e}{\log(1+\frac{\text{SNR}}{1+\rho})}
\end{equation}
where $\rho \in [0,1]$ should be chosen so that $n_{pp}$ is minimized.

Consider a multihop network with $K$ relays all performing decode-and-forward. 
Because at each hop the information is decoded,
each relay introduces a delay given by \eqnref{delay} and the total delay obtained from \eqnref{delay} is, \cite{WenBerry2008}
\begin{equation} \eqnlabel{delayDF}
D_{DF} \ge \sum_{k=1}^{K+1} \frac{\rho_k B-\log \delta_k}{\log(1+\frac{\text{SNR}_k}{1+\rho_k})}
\end{equation}
where $\delta_k$ denotes the required level of reliability at relay $k$, $\text{SNR}_k$ is received SNR at relay $k$ and we denoted the destination node as $K+1$.

We consider the amplify-and-forward cooperative strategy next.  
When all $K$ relays perform amplify-and-forward, it is straightforward to derive from \eqnref{channelmodel}, \eqnref{yD}, \eqnref{afnof} and \eqnref{beta}  that the received signal at the destination can be written as
\begin{equation} \eqnlabel{equivalentchannel}
y_D(i) = h_{e,K} x_0(i-K) + z_{e,K}(i) 
\end{equation}
where 
\begin{align}
& h_{e,K} = h_0 \prod_{i=1}^K \beta_i h_i \nonumber \\
& z_{e,K}(i) = \sum_{k=1}^K \left( \prod_{j=k}^K \beta_j h_j \right) z_k(i-K+k-1) + z_D(i). 
\end{align}
We denote the received SNR in the equivalent channel \eqnref{equivalentchannel} as $\gamma(K)$:
\begin{equation} \eqnlabel{gammaK}
\gamma(K) = \text{SNR}_0\frac{\prod_{i=1}^K (\beta_i h_i)^2}{\sum_{k=1}^K \prod_{i=k}^K (\beta_i h_i )^2 + 1}
\end{equation}
where $\text{SNR}_0 = h_0^2 P_0/\sigma^2$.
We observe that the output \eqnref{equivalentchannel} is the same as in a point-to-point channel with received SNR $\gamma(K)$. From \eqnref{delay}, it then follows that the delay introduced by $K$ AF relays is given by 
\begin{equation}\eqnlabel{delayAF}
D_{AF} \ge \frac{\rho B-\log \delta_e}{\rho \log(1+\frac{\gamma(K)}{1+\rho})}.
\end{equation}
where $\delta_e$ denotes required end-to-end reliability.

We observe from \eqnref{delayDF} that each DF relay introduces a delay $n_{pp}$ given by \eqnref{delay}. This is due to the fact that a DF node has to wait to receive the whole codeword prior to forwarding. In contrast, an AF node can forward on symbol-per-symbol, reducing a network of a cascade of AF nodes to a point-to-point channel \eqnref{equivalentchannel}, albeit with reduced SNR. Each AF node reduces the received SNR by amplifying the noise thereby reducing the transmission rate and ultimately increasing the delay. In fact, below a certain value of $\gamma(K)$, an AF node will cause a larger delay than a DF node. 

To determine the optimum number and positions of AF and DF relays that minimize the end-to-end delay in the considered multihop network, we next define a following optimization problem.
Let $N_{DF}$ denote the number of DF nodes in the network including the source.
Let $K_i \in \{0, \ldots, H \}$ denote the number of AF relays in between $(i-1)$th and $i$th DF relay and
let $p_i$ denote the index (position) of the $i$th DF relay. Then, $p_1=0.$
The delay introduced between  $(i-1)$th and $i$th DF relay is given by \eqnref{delayAF} for $K=K_i$.
We formulate the optimization problem as:
\begin{align} \eqnlabel{genoptimization}
&D^* = \min_{N_{DF},K_i} \sum_{i=1}^{N_{DF}} \frac{\rho_i B-\log \frac{\delta_e}{N_{DF}}}{\rho_i \log(1+\frac{\gamma(p_i,K_i)}{1+\rho_i})} \nonumber \\
&\text{s.t. } \sum_{i=1}^{N_{DF}} K_i + N_{DF} = H+1
\end{align}
where
\begin{equation} \eqnlabel{GammaKi}
\gamma(p_i,K_i) = \frac{\text{SNR}_{p_i} \prod_{j=p_i+1}^{p_i+K_i} (\beta_j h_j)^2}{\sum_{k=p_i+1}^{p_i+K_i} (\prod_{j=k}^{p_i+K_i} \beta_j h_j )^2  +1}
\end{equation}
and  $\text{SNR}_{p_i} = h_i^2 P_i/\sigma^2$.
The reliability constraint at each hop is chosen such that, by union bound, the end-to-end reliability constraint is satisfied.
The solutions to this problem can efficiently be found by dynamic programming \cite{BookDP}.
We present solution for examples of networks later in this section.

The next theorem present a solution to \eqnref{genoptimization} in high SNR. It shows that in the high SNR regime, the minimum delay is obtained when all relays perform AF.
\begin{MyTheorem} \thmlabel{HighSNR}
Let $P_k = s \hat P_k$ for all $k = 0, \dots, H$. Then
\begin{equation}
\lim _{s \rightarrow \infty} \frac{D_{AF}(s)}{D_{DF}(s)} = \frac {1}{N_{DF}}.
\end{equation}
\end{MyTheorem}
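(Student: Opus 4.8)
The strategy is to show that, at high SNR, the delay contributed by each maximal amplify-and-forward stretch collapses to one common leading-order expression, so that the end-to-end delay is controlled only by how many decode-and-forward segments the route is partitioned into. Since the all-AF route is a single such segment while the DF scheme has $N_{DF}$ of them, the ratio of the two delays will tend to $1/N_{DF}$.

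The first and central step is to determine the scaling of the effective SNR in $s$. Setting $P_k=s\hat P_k$ and using the maximal gain allowed by \eqnref{beta} gives $\beta_k^2=s\hat P_k/(h_{k-1}^2 s\hat P_{k-1}+\sigma^2)$, which tends to the finite, strictly positive limit $\hat P_k/(h_{k-1}^2\hat P_{k-1})$ as $s\to\infty$. Consequently, in \eqnref{GammaKi} the noise-accumulation ratio $\prod_j(\beta_j h_j)^2/\big(\sum_k\prod_j(\beta_j h_j)^2+1\big)$ converges to a strictly positive constant, while the prefactor obeys $\text{SNR}_{p_i}=\Theta(s)$. Hence every segment SNR grows linearly, $\gamma(p_i,K_i)=a_i s+o(s)$ with $a_i>0$, and in particular so does the all-AF SNR $\gamma(H)$.

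I would then pass to logarithms. For every $\rho\in[0,1]$ and every segment, $\log\!\big(1+\gamma(p_i,K_i)/(1+\rho)\big)=\log s+O(1)\sim\log s$, the $O(1)$ term absorbing both the constant $a_i$ and the $\rho$-dependence. Substituting into the segment delay \eqnref{delayAF}, the minimizing exponent satisfies $\rho^\star\to1$, because $(\rho B-\log\delta)/\rho=B-(\log\delta)/\rho$ is decreasing in $\rho$ when $\log\delta<0$; the optimized segment delay is then $(B+O(1))/\log s$, with the message size $B$ dominating the reliability terms $\log(\delta_e/N_{DF})$ in the numerator. Assembling, the single-segment all-AF delay is $D_{AF}(s)=B/\log s+o(1/\log s)$, while the $N_{DF}$-segment sum in \eqnref{genoptimization} gives $D_{DF}(s)=N_{DF}\,B/\log s+o(1/\log s)$; the common factor $\log s$ cancels in the quotient and the claim $D_{AF}(s)/D_{DF}(s)\to1/N_{DF}$ follows.

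The hard part is the first step: one must verify that \emph{no} segment SNR saturates, i.e. that each $\gamma(p_i,K_i)$ genuinely grows like $\Theta(s)$ rather than tending to a constant. This rests on the amplification gains converging to strictly positive limits and on the noise-accumulation denominator in \eqnref{GammaKi} remaining bounded, both of which follow from $\beta_k^2\to\hat P_k/(h_{k-1}^2\hat P_{k-1})$. Once the uniform $\Theta(s)$ scaling is established, the equality of the leading $\log s$ terms across all segments is immediate, and the $\rho$-optimization together with the control of the lower-order reliability terms is routine.
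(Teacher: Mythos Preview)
Your proposal is correct and follows essentially the same route as the paper's outline: both establish that every segment's effective SNR scales linearly in $s$ (the paper via the explicit high-SNR identity $\gamma(K)=\bigl(\sum_{k=0}^{K}1/\text{SNR}_k\bigr)^{-1}$, you via the limits $\beta_k^2\to\hat P_k/(h_{k-1}^2\hat P_{k-1})$), after which the common $\log s$ scaling of the denominators and the $1/N_{DF}$ ratio follow directly. Your handling of the $\rho$-minimization and of the reliability terms is in fact more explicit than the paper's outline, which simply defers to \eqnref{delayAF} and \eqnref{delayDF}.
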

\begin{proof} ({\it Outline})
In high SNR, the SNR after $K$ stages of AF relays, \eqnref{gammaK}, reduces to
\begin{equation}
\gamma (K) = \left( \sum_{k=0}^K \frac{1}{\text{SNR}_k} \right)^{-1}
\end{equation}
where we used the notation $\text{SNR}_k = h_k^2P_k/\sigma^2$.
The rest of the proof follows from evaluating the delay associated with AF relays \eqnref{delayAF} and with DF relays \eqnref{delayDF}.
\end{proof}
\begin{MyRemark}
The intuition for this solution comes from the fact that AF introduces a smaller delay than DF for the price of reduced SNR for each AF hop.
In high SNR, however, the SNR loss is negligible and therefore multihop AF is optimal.
\end{MyRemark}

We next consider a symmetric network with equal power and channel gains.
We consider a subproblem of \eqnref{genoptimization} wherein we optimize $K_i$, $i=1, \ldots, N_{DF}$ for fixed $N_{DF}$.
To give an insight to the optimum solution, we relax the constraint that $K_i$ is an integer.
We have the following result.
\begin{MyLemma}
For a symmetric network with $N_{DF}$ decode-and-forward nodes, the optimum solution satisfies for all $i = 1, \ldots, N_{DF}$
\begin{equation} \eqnlabel{symmetricsolution}
K_i^* = \frac{H}{N_{DF}}.
\end{equation}
\end{MyLemma}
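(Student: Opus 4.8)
The plan is to use the network's symmetry to reduce \eqnref{genoptimization} to a one-dimensional separable convex program and then read off the balanced optimum. First I would note that with all powers and gains equal the received SNR after a run of AF relays in \eqnref{GammaKi} loses its dependence on the position $p_i$: writing $a=(\beta h)^2=\text{SNR}/(1+\text{SNR})$ and summing the geometric series in \eqnref{GammaKi} gives the closed form $\gamma(p_i,K_i)=\gamma(K_i)=a^{K_i+1}/(1-a^{K_i+1})$, which is strictly decreasing in $K_i$. Since the per-hop reliability $\delta_e/N_{DF}$ is common to every term, each summand of the objective is the \emph{same} function of its block length, namely the minimum AF delay \eqnref{delayAF} evaluated at $\gamma(K_i)$; call it $f(K_i)$. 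For fixed $N_{DF}$ the relaxed problem is then
\begin{equation}
\min_{K_i\ge 0}\ \sum_{i=1}^{N_{DF}} f(K_i)\quad\text{s.t.}\quad \sum_{i=1}^{N_{DF}}K_i=H+1-N_{DF},
\end{equation}
a separable objective under a single symmetric linear constraint.

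The crux is to show that $f$ is convex in $K$ on the operating regime of interest. As the numerator in \eqnref{delayAF} is a positive constant, it suffices to show that $g(K)=\log(1+\gamma(K)/(1+\rho))$ is positive and concave, because the reciprocal of a positive concave function is convex: $(1/g)''=(2(g')^2-g\,g'')/g^3\ge 0$ whenever $g''\le 0$. I would prove the concavity of $g$ by inserting the closed form of $\gamma(K)$ and differentiating twice, tracking the decay $\gamma(K)\to 0$ as $K$ grows. This is the step I expect to be the main obstacle, since the composition of the logarithm with the rational-exponential $\gamma(K)$ has no obviously signed second derivative; in fact its curvature is regime dependent and flips sign in the extreme high-SNR limit, which is exactly consistent with \Thmref{HighSNR}, where interior DF nodes are never used and the allocation question is vacuous. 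I would therefore carry out the verification in the complementary, DF-beneficial regime and state that hypothesis explicitly.

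Granting convexity, the conclusion follows in one line. The relaxed program is convex with a symmetric separable objective and a single symmetric linear constraint, so I would finish in either of two equivalent ways. Lagrange stationarity reads $f'(K_i)=\lambda$ for a common multiplier $\lambda$; strict convexity makes $f'$ injective and forces $K_1=\cdots=K_{N_{DF}}$. Equivalently, convexity of $f$ renders $\sum_i f(K_i)$ Schur-convex, so the balanced vector, which is majorized by every feasible point, is the minimizer. The equality constraint then fixes the common value, yielding \eqnref{symmetricsolution}.
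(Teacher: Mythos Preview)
Your plan is essentially the paper's own: the outline there also forms the Lagrangian for \eqnref{genoptimization} with $N_{DF}$ fixed and reads off the symmetric stationary point $f'(K_i)=\lambda$, and your added closed form $\gamma(K)=a^{K+1}/(1-a^{K+1})$ together with the convexity discussion supply detail the sketch omits. One bookkeeping caveat worth flagging: the constraint $\sum_i K_i+N_{DF}=H+1$ pins the common value at $(H+1-N_{DF})/N_{DF}$ rather than literally $H/N_{DF}$, a discrepancy inherited from the paper's counting conventions rather than from your argument.
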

\begin{proof} ({\it Outline})
The proof follows by forming the Lagrangian for the optimization problem \eqnref{genoptimization} and by deriving the optimality conditions.
\end{proof}
Therefore the optimal positions of AF and DF nodes are such that all AF nodes are positioned at the equal distance.


For a network with a single relay, Figure~\ref{Fig3} shows the delay associated with DF and AF, as a function of channel gains. Channel gains to and from the relay are chosen equal. We observe that, for smaller values of channel gains, DF at the relay is optimal. After a certain threshold, AF becomes optimal. 
\begin{figure}[t]
\centering
\includegraphics[height=6cm, width=3.5in]{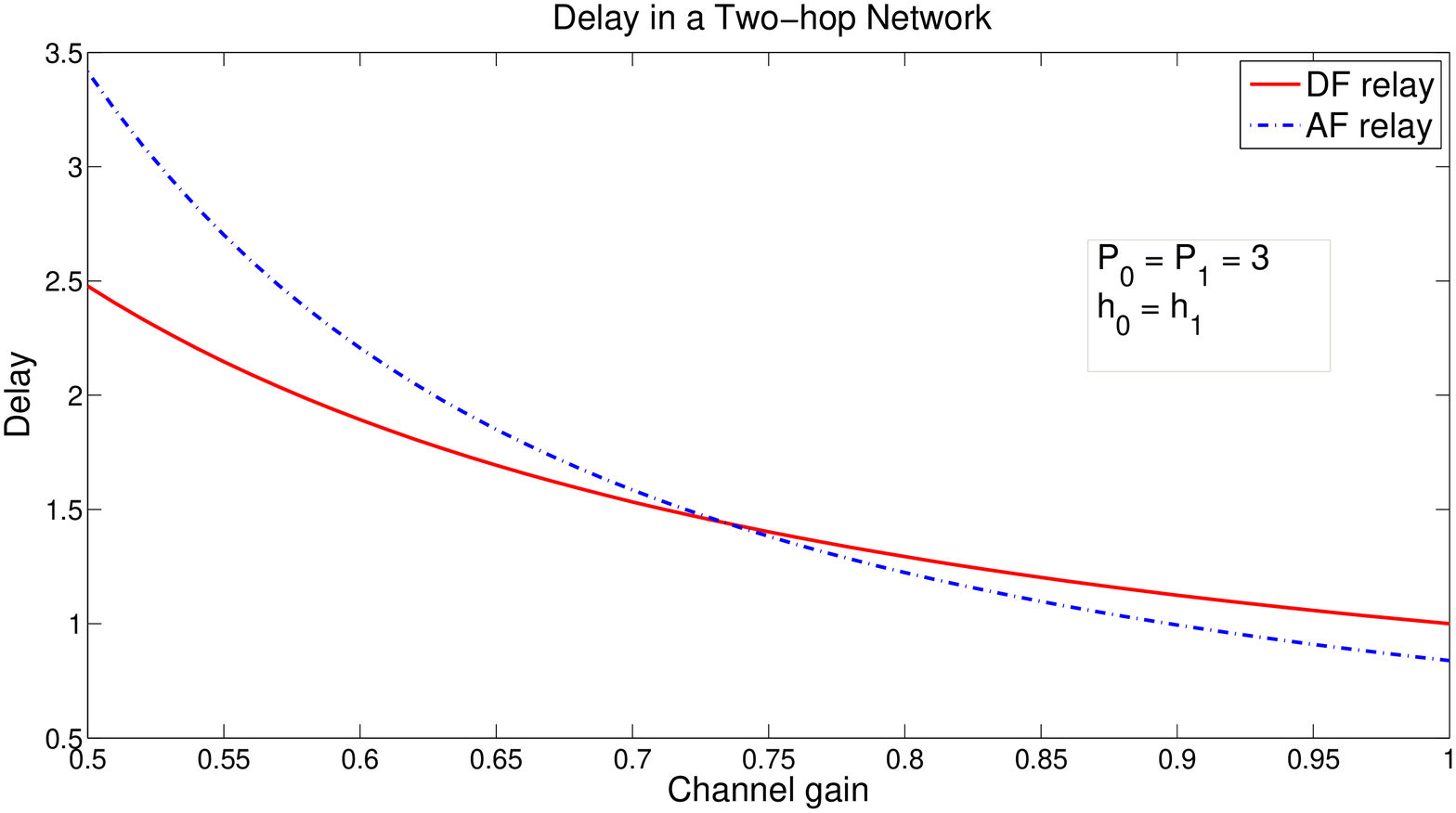}
\caption{Delay in a two-hop network.}
      \label{Fig3}
\end{figure}

\begin{figure}[t]
\centering
\includegraphics[height=.9cm, width=3.5in]{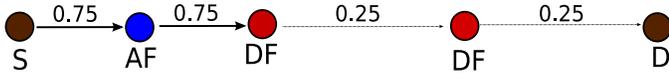}
\caption{Optimal selection of AF and DF relays in a 4-hop network. Transmit powers are chosen equal at all nodes. Channel gains are shown for each hop. }
      \label{Fig3E}
\end{figure}

\begin{figure}[t]
\centering
\includegraphics[height=.9cm, width=3.5in]{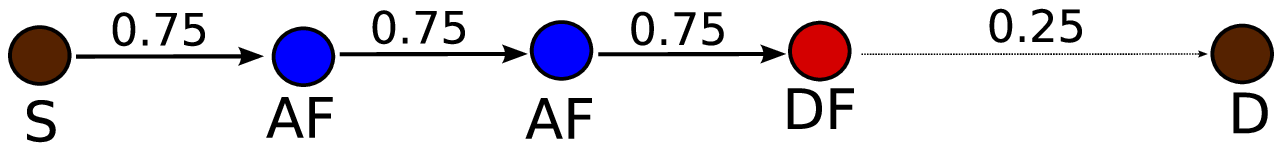}
\caption{Optimal selection of AF and DF relays  in a 4-hop network. Channel gains are shown for each hop.}
      \label{Fig4E}
\end{figure}
\begin{figure}[t]
\centering
\includegraphics[height=.9cm, width=3.5in]{}
\caption{Optimal selection of AF and DF relays in a 4-hop network. Channel gains are shown for each hop. This network is in high SNR.}
      \label{Fig5E}
\end{figure}


Figures~\ref{Fig3E}-\ref{Fig5E} show a solution of optimization problem \eqnref{genoptimization} for three examples of 4-hop networks. In all three examples the transmit powers at nodes are chosen equal.
We observe that the optimal relay selection in Fig.~\ref{Fig3E}  differs from the common practice solution whereby only relay $3$ would perform DF. We observe that in the optimum solution, relay $2$ chooses to perform DF to compensate for weak links further down in the transmission chain. In contrast, if relay $2$ would make decision solely based on its received SNR, it would choose to perform AF.     Another example in which the optimal selection differs from the common practice is shown in \ref{Fig4E}. Fig.~\ref{Fig5E} shows that for large values of channel gains the optimal solution is AF for all relays, in agreement with Theorem \thmref{HighSNR}. 

\section{Delay in Networks with Feedback} \label{Feedback}
\begin{figure}[t]
\centering
\includegraphics[height=3.5cm, width=3.5in]{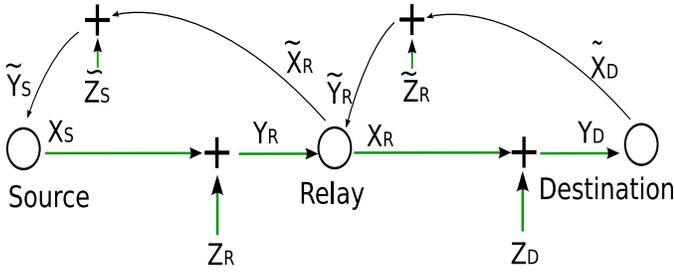}
\caption{Single relay channel with per-hop feedback. No direct link is assumed between the source and the destination.}
      \label{Fig1}
\end{figure}
To examine the impact of feedback, we first analyze the network with a single relay. 
We assume that from each receiving node i.e., the relay and the destination,  there is a feedback link (see Fig.~\ref{Fig1}). The feedback links are orthogonal to the forward links.
The forward channel output at the relay is given by \eqnref{channelmodel} and at the destination by \eqnref{yD}, for $H=1$. Since there is only one relay, we denote the source and relay inputs as $x_S$ and $x_R$ respectively. We use similar notation also for the outputs and the noise at each receiver. The input-output relationship is then given by:
\begin{align} \eqnlabel{FeedbackChannelModel}
&y_R = h_S x_S + z_R \nonumber \\
&y_D = h_R x_R + z_D.
\end{align}
As before, the source and the relay satisfy respective power constraints $E[X_S^2] \le P_S$ and $E[X_R^2] \le P_R$.

Similarly to the forward channel, the feedback channel is:
\begin{align} \eqnlabel{ChannelModelFeedback}
&\tilde y_R = \tilde h_R \tilde x_D + \tilde z_R \nonumber \\
&\tilde y_S = \tilde h_S \tilde x_R + \tilde z_S 
\end{align}
and the power constraints are given by  $E[\tilde X_R^2] \le \tilde P_R$ and $E[\tilde X_D^2] \le \tilde P_D$. Noises $\tilde z_R$ and $\tilde z_S$ have zero mean and respective variance $\tilde \sigma^2_R$ and $\tilde \sigma^2_S$.

To analyze the impact of feedback on the delay in this channel, we extend the results of \cite{KimLapidothWeissman2011} that develops error exponents for the point-to-point channel with active noisy feedback.  The analysis in \cite{KimLapidothWeissman2011} assumes binary signaling, and in the reminder of the paper, we make the same assumption. In particular, we assume that the encoder sends a single bit that takes values $0$ or $1$ equiprobably. 
We have the following theorem.
\begin{MyTheorem}\thmlabel{Feedback1}
The error exponent achievable in the considered single-relay channel with active noisy feedback is bounded by
\begin{equation}\eqnlabel{EFB}
E_{FB} \ge \frac{2P_S}{\sigma_F^2} +\frac{2 \tilde P_{D}}{\sigma_{FB}^2}
\end{equation}
where
\begin{align} \eqnlabel{sigmaF}
&\sigma^2_F = \frac{\sigma^2_R}{h_S^2} + \frac{\sigma^2_D}{(h_Sh_R \beta)^2}  \\
&\sigma^2_{FB} = \frac{\tilde \sigma^2_R}{\tilde h_R^2} + \frac{\tilde \sigma^2_S}{(\tilde h_S \tilde h_R \tilde \beta)^2} \nonumber
\end{align}
and  the corresponding delay for the reliability level $\delta_e$ is bounded by
\begin{equation}\eqnlabel{Delay}
 n_{FB} \ge \left( \frac{2P}{\sigma_F^2} + \frac{2 \tilde P}{\sigma_{FB}^2}  \right)^{-1}  \log \frac{1}{\delta_e}.
\end{equation}
\end{MyTheorem}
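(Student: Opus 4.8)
The plan is to reduce the two-hop relay-with-feedback system to an equivalent point-to-point Gaussian channel with active noisy feedback, and then invoke the error-exponent analysis of \cite{KimLapidothWeissman2011} for that canonical setting. The relay is assumed to amplify-and-forward in both the forward and the feedback directions, which is why both $\beta$ and $\tilde\beta$ appear in \eqnref{sigmaF}; consequently each two-hop link collapses into a single Gaussian link, and the combined system becomes exactly a point-to-point channel equipped with active noisy feedback.

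First I would collapse the forward link. Specializing the equivalent-channel identity \eqnref{equivalentchannel} to $K=1$ in the feedback-section notation ($h_S=h_0$, $h_R=h_1$, $z_R=z_1$), the source-to-destination channel becomes $y_D = h_S h_R \beta\, x_S + (\beta h_R z_R + z_D)$. Referring this signal back to the source input, i.e.\ dividing by the effective gain $h_S h_R \beta$, the input-referred additive noise has variance $\sigma_R^2/h_S^2 + \sigma_D^2/(h_S h_R \beta)^2$, which is precisely $\sigma_F^2$. The identical computation applied to the feedback equations \eqnref{ChannelModelFeedback}, with the relay amplifying $\tilde y_R$ by $\tilde\beta$, yields a destination-to-source link $\tilde y_S = \tilde h_S \tilde h_R \tilde\beta\, \tilde x_D + (\tilde h_S \tilde\beta \tilde z_R + \tilde z_S)$ whose input-referred noise variance is exactly $\sigma_{FB}^2$. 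These two reductions establish the effective point-to-point model.

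Next I would transfer the active-noisy-feedback analysis. After the two reductions the system is a point-to-point Gaussian channel with forward noise variance $\sigma_F^2$ and an active noisy feedback channel with noise variance $\sigma_{FB}^2$; since the forward and feedback links are orthogonal and the relay noises are mutually independent, the independence hypotheses underlying \cite{KimLapidothWeissman2011} are satisfied. For binary antipodal signaling their scheme attains an error exponent equal to twice the sum of the forward and feedback signal-to-noise ratios, and substituting the effective parameters gives $E_{FB} \ge 2P_S/\sigma_F^2 + 2\tilde P_D/\sigma_{FB}^2$, which is \eqnref{EFB}. The delay bound \eqnref{Delay} then follows immediately: meeting the reliability constraint requires $P_e \le \delta_e$, and since $P_e$ decays like $e^{-n_{FB} E_{FB}}$ one needs $n_{FB} E_{FB} \ge \log(1/\delta_e)$, which rearranges to $n_{FB} \ge E_{FB}^{-1}\log(1/\delta_e)$.

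I expect the main obstacle to lie in the third step, namely verifying that the point-to-point active-noisy-feedback scheme of \cite{KimLapidothWeissman2011} transplants to the AF-relay setting without loss. In particular, one must confirm that amplifying the feedback at the relay, rather than re-encoding it, neither couples the forward and feedback noise processes nor invalidates the power normalization, since the gains $\beta$ and $\tilde\beta$ must still respect the relay power constraints in the sense of \eqnref{beta}. Once the channel equivalence is established cleanly, the exponent and the delay expressions are obtained by direct substitution.
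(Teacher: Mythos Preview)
Your proposal is correct and follows essentially the same route as the paper: amplify-and-forward at the relay in both the forward and feedback directions, collapse each two-hop link into a unit-gain Gaussian channel with noise variance $\sigma_F^2$ (resp.\ $\sigma_{FB}^2$), and then invoke \cite[Sec.~VII]{KimLapidothWeissman2011} for the resulting point-to-point channel with active noisy feedback. Your additional remarks about checking that the AF operation respects the relay power constraint and preserves noise independence are well placed, though the paper treats these as immediate from \eqnref{beta} and the model assumptions.
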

\begin{proof}
We consider the amplify-and-forward cooperative strategy at the relay both on the forward and the backward channel.  
The transmit signal at the relay at time $i$ in the forward channel is then given by \eqnref{afnof} for $k=R$
\begin{align}
x_R(i) = \beta y_R(i-1)  \eqnlabel{af}
\end{align}
where $\beta$ from \eqnref{beta} satisfies:
\begin{equation}
\beta^2 \le \frac{P_R}{h_s^2P_S + \sigma^2_R}. \eqnlabel{betafeedback}
\end{equation}
The received signal at the destination evaluates from \eqnref{equivalentchannel} to be
\begin{align} \eqnlabel{EquivalentChannel}
y_D(i)& = h_{eq}x_S(i-1) + z_{eq}(i)
\end{align}
where  
\begin{equation} \eqnlabel{heq}
h_{eq} = h_R \beta h_S
\end{equation}
and
\begin{equation} \eqnlabel{zeq}
z_{eq}(i) = h_R \beta z_R(i-1) + z_D(i). 
\end{equation}
Channel \eqnref{EquivalentChannel} is equivalent to the unit-gain channel given by
\begin{align} \eqnlabel{forwardchannelunitvariance}
y_D(i)& = x_S(i-1) + z_F(i)
\end{align}
where $z_F = z_{eq}/ h_{eq}$. Using \eqnref{heq} and \eqnref{zeq} we obtain that the variance of $z_F$   equals $\sigma^2_F$ given by \eqnref{sigmaF}.

Similarly, the transmit signal at the relay in the reverse channel is given by
\begin{align}
\tilde x_R(i) =\tilde \beta \tilde y_R(i-1)  \eqnlabel{afF}
\end{align}
with amplification gain
\begin{equation}
\tilde \beta^2 \le \frac{\tilde P_R}{\tilde h_R^2 \tilde P_D + \tilde \sigma^2_R}. \eqnlabel{betafeedback}
\end{equation}
The received feedback signal at the source is given by
\begin{align} \eqnlabel{EquivalentChannelSource}
\tilde y_S(i) & = \tilde h_S \tilde \beta \tilde h_R \tilde x_D(i-1) + \tilde h_S \tilde \beta \tilde z_R(i-1) + \tilde z_S(i) \nonumber \\
& = \tilde h_{eq}x_D(i-1) + \tilde z_{eq}(i)
\end{align}
where
\begin{equation}
\tilde h_{eq} = \tilde h_S \tilde \beta \tilde h_R, 
\end{equation}
and
\begin{equation}
\tilde z_{eq}(i) = \tilde h_S \tilde \beta \tilde z_R(i-1) + \tilde z_S(i). 
\end{equation}
Again, we can consider the equivalent unit-gain channel
\begin{align} \eqnlabel{feedbackchannelunitvariance}
\tilde y_S(i) & = x_D(i-1) + \tilde z_{FB}(i)
\end{align}
where $z_{FB} = \tilde z_{eq}/\tilde h_{eq}$ has the variance $\sigma^2_{FB}$ given by \eqnref{sigmaF}.

The equivalent channel model given by \eqnref{forwardchannelunitvariance} and \eqnref{feedbackchannelunitvariance}  corresponds to a Gaussian point-to-point channel with a noisy active feedback with noise variances given by \eqnref{sigmaF}, analyzed in \cite[Sec. VII]{KimLapidothWeissman2011}.
 The result in \cite[Sec. VII]{KimLapidothWeissman2011}  applies yielding the error exponent given by \eqnref{EFB}.
\end{proof}
\begin{MyRemark}
We compare the delay \eqnref{Delay} to the delay in the point-to-point channel with the respective noise variances in the forward and the feedback channel $\sigma^2_{D,pp}$ and $\tilde \sigma^2_{S,pp}$, and the respective channel gains $h_{S,pp}$ and $\tilde h_{D,pp}$. The error exponent of this channel, as determined in \cite[Sec. VII]{KimLapidothWeissman2011}, is given by \eqnref{EFB} with $\sigma^2_{F} = \sigma^2_{D,pp}/h^2_{S,pp}$ and $\sigma^2_{FB}=\tilde \sigma^2_{S,pp}/\tilde h^2_{D,pp}$. By comparing \eqnref{EFB} for the relay with  the point-to-point channel, we obtain the sufficient conditions under which the delay in the relay channel is smaller: 
\begin{align}\eqnlabel{conditionRemark2}
&\frac{\sigma^2_R}{h_S^2} + \frac{\sigma^2_D}{(h_Sh_R\beta)^2}< \frac{\sigma^2_{D,pp}}{h^2_{S,pp}} \\
&\frac{\tilde \sigma^2_R}{\tilde h_R^2} + \frac{\tilde \sigma^2_S}{(\tilde h_S \tilde h_R\tilde \beta)^2}< \frac{\tilde \sigma^2_{Spp}}{\tilde h^2_{D,pp}} \nonumber.
\end{align} 
Consider the case when all the noise variances and powers are the same and the channel is in high SNR. Then $\beta=1/h^2_S$, $\tilde \beta=1/\tilde h^2_R$ and conditions \eqnref{conditionRemark2} reduce to
\begin{align}
&\frac{1}{h_S^2} + \frac{1}{h_R^2}< \frac{1}{h^2_{S,pp}} \\
&\frac{1}{\tilde h_R^2} + \frac{1}{\tilde h_S^2}< \frac{1}{\tilde h^2_{D,pp}} \nonumber.
\end{align} 
When the above conditions on the channel gains are satisfied, the error exponent and the delay are improved by the help of the relay.
\end{MyRemark}
\begin{MyRemark}
We compare the delay \eqnref{Delay} to the delay in the considered relay channel when there is no feedback.
As before, the relay performs amplify-and-forward and the channel is given by \eqnref{EquivalentChannel}. 
The error probability with binary signaling for this channel is given by 
\begin{equation}
P_b = \exp (-\frac{h^2_{eq}P_S}{2 \sigma_{eq}^2}n)
\end{equation}
where $h_{eq}$ is given by \eqnref{heq} and $\sigma_{eq}^2$ is the variance of $z_{eq}$ given by \eqnref{zeq}.
The corresponding delay for the reliability level $\delta_e$ is 
\begin{equation} \eqnlabel{DelayNoFeedback}
 n \ge \frac{2 \sigma_{eq}^2}{h_{eq}^2P_S} \log \frac{1}{\delta_e}.
\end{equation}
Delay \eqnref{Delay} and \eqnref{DelayNoFeedback} can be easily compared for case that $P=\tilde P$ and all noises have the same variance and $\sigma_F \ge \tilde \sigma_{FB}$ in \eqnref{sigmaF}. In this case, the delay with feedback is always smaller than the delay in the no-feedback channel.
\end{MyRemark}
\vspace{0.2cm}
We can now extend the result of Thm.~\thmref{Feedback1} to multihop network with active noisy feedback between every transmitter/receiver pair as shown in Fig.~\ref{Multihop}.
 
\begin{MyTheorem}\thmlabel{Feedbackmultihop}
The error exponent achievable in the multihop relay network with $K$ relays and with active noisy feedback is bounded by
\begin{equation}\eqnlabel{EFBmultihop}
E_{FB} \ge \frac{2P_S}{\sigma_F^2} +\frac{2 P_{FB}}{\sigma_{FB}^2}
\end{equation}
where
\begin{align} \eqnlabel{sigmamultihop}
& \sigma^2_F =  \frac{\sum_{k=1}^H \prod_{i=k}^H (\beta_i h_i )^2 \sigma^2_k + \sigma^2_D}{h^2_0\prod_{i=k}^H (\beta_i h_i )^2 } \\
&\sigma^2_{FB} =  \frac{\sum_{k=1}^H \prod_{j=k}^H (\tilde \beta_j \tilde h_{j-1} )^2 \tilde \sigma^2_k + \tilde  \sigma^2_S}{\tilde h_0^2\prod_{k=1}^H (\tilde \beta_k \tilde h_k )^2 }. \nonumber
\end{align}
\end{MyTheorem}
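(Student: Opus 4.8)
The plan is to reduce the $H$-hop network with active noisy feedback to an equivalent Gaussian point-to-point channel with active noisy feedback and then invoke \cite[Sec.~VII]{KimLapidothWeissman2011}, exactly as in the proof of \Thmref{Feedback1}. The only new ingredient relative to the single-relay case is that the amplify-and-forward operation is now cascaded over all $H$ relays in each direction, so the bulk of the work is to show that this cascade still collapses to a \emph{memoryless} unit-gain channel with the noise variances stated in \eqnref{sigmamultihop}.

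First I would let every relay perform amplify-and-forward as in \eqnref{afnof}--\eqnref{beta} on both the forward and the feedback channels. On the forward direction the cascade of AF relays collapses to the equivalent channel \eqnref{equivalentchannel}, with combined gain $h_{e,K}$ and aggregate noise $z_{e,K}$. Normalizing by $h_{e,K}$ to obtain a unit-gain channel $y_D(i) = x_0(i-K) + z_F(i)$ with $z_F = z_{e,K}/h_{e,K}$, I would compute $\sigma_F^2 = \var(z_{e,K})/h_{e,K}^2$. Since the relay and destination noises are mutually independent, $\var(z_{e,K}) = \sum_{k=1}^H (\prod_{j=k}^H \beta_j h_j)^2 \sigma_k^2 + \sigma_D^2$, which, divided by $h_{e,K}^2 = h_0^2 \prod_{i=1}^H (\beta_i h_i)^2$, yields the expression for $\sigma_F^2$ in \eqnref{sigmamultihop}. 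Repeating the identical reduction on the feedback direction gives the unit-gain reverse channel together with $\sigma_{FB}^2$.

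The step I expect to be the main obstacle is verifying that the aggregate noise process $\{z_{e,K}(i)\}_i$ is white, since the point-to-point feedback exponent of \cite[Sec.~VII]{KimLapidothWeissman2011} is stated for a memoryless Gaussian feedback channel. Each $z_{e,K}(i)$ is a linear combination of the delayed samples $z_k(i-K+k-1)$, $k=1,\dots,H$, together with $z_D(i)$. I would argue that two distinct channel uses $i \neq i'$ never reuse a common underlying noise sample: for a fixed source $z_k$ the sampling times $i-K+k-1$ are distinct whenever $i \neq i'$, while samples drawn from different sources $z_k, z_{k'}$ with $k \neq k'$ are independent by assumption. Hence $\{z_F(i)\}_i$ is i.i.d.\ zero-mean Gaussian with variance $\sigma_F^2$, and likewise $\{z_{FB}(i)\}_i$ is i.i.d.\ Gaussian with variance $\sigma_{FB}^2$.

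Having established that the equivalent forward and feedback directions form a unit-gain Gaussian point-to-point channel with active noisy feedback and white noise variances $\sigma_F^2$ and $\sigma_{FB}^2$, I would apply \cite[Sec.~VII]{KimLapidothWeissman2011} verbatim, as in \Thmref{Feedback1}, to obtain the error-exponent bound \eqnref{EFBmultihop}; the associated delay bound follows by the same substitution into the reliability requirement \eqnref{reliability} that produced \eqnref{Delay}.
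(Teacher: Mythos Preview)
Your proposal is correct and follows exactly the approach the paper intends: the paper's own proof consists of the single line ``the proof follows the same steps as the proof of Theorem~\thmref{Feedback1},'' i.e., cascade amplify-and-forward in both directions, collapse to the equivalent unit-gain point-to-point channel with noisy feedback, and invoke \cite[Sec.~VII]{KimLapidothWeissman2011}. Your additional check that the aggregate noise process is white is a welcome bit of rigor that the paper leaves implicit.
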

\begin{proof}
The proof follows the same steps as the proof of Theorem~\thmref{Feedback1}.
\end{proof}
\begin{figure}[t]
\centering
\includegraphics[height=1.5cm, width=3.5in]{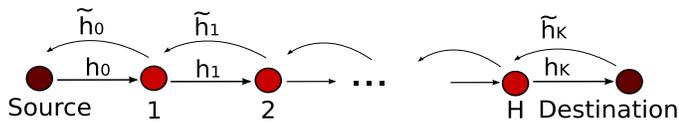}
\caption{Multihop network with feedback.}
      \label{Multihop}
\end{figure}
\vspace{0.25cm}
{\it Discussion}

\vspace{0.25cm}
The above results show the improvement in terms of the required codelength $n$ over the no-feedback case. However,  the analysis presented in the previous section does not capture the propagation delay. In high-frequency trading applications where data is sent over multiple hops and hundreds of kilometers, the propagation delay is a dominant factor. Using feedback at every link is not appropriate as it would significantly increase the propagation delay. Instead, the presented analysis points that a feedback could potentially reduce the delay when used  sporadically, on certain hops.
\section{Summary and Future Work} \label{Conclusion}
We have presented an information-theoretic approach that, based on error exponents,  optimally determines a cooperative strategy between DF and AF for every relay.  We showed that in high SNR the optimum transmission scheme is for all relays to perform AF. We have then derived the error exponents in the considered multihop network in the presence of noisy, active feedback.

Our results demonstrate that the choice of the cooperative scheme cannot be made solely based on the received SNR at that relay, as is a common practice. Instead, this choice depends on the channel conditions in the whole network.

In practice, in addition to the average power constraint, the peak power constraint needs to be satisfied. This motivates extending the presented analysis under the peak power constraints. Achievable error exponents under the peak power constraints for the point-to-point channel were derived in \cite{XiangKim2010}.
Furthermore, the considered problem can be extended to include slow fading analysis. 
In addition to decode-and-forward and amplify-and-forward schemes analyzed in this paper, these results can be extended also to compress-and-forward \cite{CoverElGamal79}. 
Another open problem is investigating whether a form of feedback, possibly used only at certain time instants and at certain nodes, could improve the end-to-end delay.
\section{Acknowledgement}
The author would like to thank Andrea Montanari for inspiring and insightful discussions.

\bibliographystyle{IEEEtran}
\end{document}